\theoremstyle{plain}
\newtheorem{thm}{Theorem}
\theoremstyle{plain}
\newtheorem*{thm*}{Theorem}
\theoremstyle{definition}
\newtheorem*{defn*}{Definition}
\theoremstyle{plain}
\newtheorem{lem}{Lemma}
\newtheorem*{acknowledgement}{Acknowledgement}
\newcommand{\be}{\begin{equation}}
\newcommand{\ee}{\end{equation}}
\newcommand{\ben}{\begin{eqnarray}}
\newcommand{\een}{\end{eqnarray}}
\newcommand{\nd}{\noindent}
\begin{document}

\title{Detecting q-Gaussian distributions and the normalization effect}

\author{C. Vignat and A. Plastino}
\address{$^1$L.T.H.I., E.P.F.L., Lausanne, Switzerland}

\address{$^2$Facultad de Ciencias Exactas, Universidad Nacional de La Plata and
CONICET, C.C. 727, 1900 La Plata, Argentina }
\email{vignat@univ-mlv.fr, plastino@uolsinectis.com.ar}

\begin{abstract}
\noindent We show that whenever data are gathered using a device
that performs a normalization-preprocessing, the ensuing
normalized input, as recorded by the measurement device, will
always be q-Gaussian distributed if the incoming data exhibit
elliptical symmetry. As a consequence, great care should be
exercised  when ``detecting" q-Gaussians. As an example, Gaussian
data will appear, after normalization, in the guise of q-Gaussian
records. Moreover, we show that the value of the resulting
parameter $q$ can be deduced from the normalization technique that
characterizes the device.

\vskip 0.5 mm  \noindent PACS: 05.40.-a, 05.20.Gg, 02.50.-r
\end{abstract}
\maketitle

\section{Introduction}
\nd Systems statistically described by {\it power-law probability
distributions} (PLD)
 are rather ubiquitous \cite{cero} and thus
of perennial interest \cite{uno,vign1,vign2}. Indeed, many objects
that come in different sizes have a self-similar power-law
distribution of their relative abundance over large size-ranges,
from cities to words to meteorites \cite{cero}. Now, PLDs under
variance constraint maximize a non-logarithmic information
measure, often  called Tsallis' entropy or q-entropy $H_q$
\cite{uno,vign1,vign2} \be \label{tsallis}
H_{q}\left(x\right)=\frac{1}{1-q}\,
\int_{\mathbb{R}^n}\,dx\,[f^{q}(x)-f(x)];\,\,\,\,\,\,\,\,\,q \in
\mathbb{R}. \ee This measure tends to the celebrated Shannon entropy
in the limit $q \rightarrow 1$ \cite{uno,vign1,vign2}. Systems for
whose description $H_q$ is relevant have received intense
attention in the last years, with more than 1200 papers extant and
hundreds of authors \cite{web}. To a large extent, the relevance
of the concomitant treatments relies on the fact that one often
confronts a particular scenario: {\it measuring real data
distributed according to a q-Gaussian probability law},  a special
kind of power-law probability distribution function (PDF).
 Consider a system $\mathcal{S}$ described by a vector $X$ with
$d$ components whose covariance matrix reads \be \label{covar}
K=\langle XX^{t}\rangle \equiv EXX^{t}, \ee the superscript $t$
indicating transposition. We say that $X$ is $q-$Gaussian
distributed if its probability distribution function writes in one
of  the two forms  to be found   below \cite{uno,vign1,vign2}:
\newline \nd for $1<q<\frac{d+4}{d+2}$
\begin{equation}
f_{X,q}\left(X\right)=
\frac{\Gamma\left(\frac{1}{q-1}\right)}{\Gamma\left(\frac{1}{q-1}
-\frac{d}{2}\right)\vert\pi\Lambda\vert^{1/2}}
\left(1+X^{t}\Lambda^{-1}X\right)^{\frac{1}{1-q}}.\label{eq:q>1gaussian}
\end{equation}
\nd Matrix $\Lambda$ is related to the covariance matrix $K$ according to
$\Lambda=\left(m-2\right)K,\label{eq:Kq>1}$ 
where the number of degrees  of freedom $m$ is defined \cite{vign1} in terms
 of the dimension $d$ of $X$ as
$m=\frac{2}{q-1}-d.\,\,\label{eq:mq>1}$
\nd Instead, in the case $q<1\,$ the $q-$Gaussian distribution is
\begin{equation}
 f_{X,q}\left(X\right)=\frac{\Gamma\left(\frac{2-q}{q-1}+\frac{d}{2}\right)}
{\Gamma\left(\frac{2-q}{1-q}\right) \vert\pi\Sigma\vert^{1/2}}
\left(1-X^{t}\Sigma^{-1}X\right)_{+}^{\frac{1}{1-q}},
\label{eq:q<1gaussian}
\end{equation}
where the matrix $\Sigma$ is
related to the covariance matrix via $\Sigma=pK$. We introduce
here a parameter $p$
 defined as 
$p=2\frac{2-q}{1-q}+d$ 
and use the notation $(x)_{+}=max(x,0).$
The focus of the present endeavor revolves around
the influence of the measurement device on the distribution of
these data. The many authors cited above
together with their readers should find the
considerations developed  below, concerning  the influence of the
so-called normalization stage on the performance of a measurement
device, of great interest  \cite{acknow}.

\nd Most  measurement devices consist of a {\it preprocessing
stage} that prevents the rest of the  device to be provided with
data of exceedingly large  amplitude that would cause damage to
the hardware. Since most of measured data are of stochastic
nature, the concomitant most natural and common technique is to
 \emph{statistically} normalize these input data. Since quite often
 the relevant statistical properties are of unknown character, the normalization
process consists of two steps: the data are first centered by
substraction of their estimated mean, and then scaled  by division by their
estimated standard deviation. In what follows, we detail these
operations, their statistical consequences, and the rather
surprising {\it impact} the procedure may have with regards to
non-extensive q-considerations.

\begin{figure}[h] 
\centering
\includegraphics[scale=0.6]{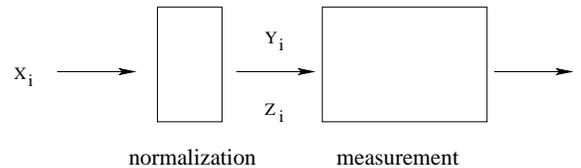}
\caption{a measurement device}
\label{fig:device}
\end{figure}

\section{The case of multivariate Gaussian data}
\subsection{Both  mean and variance unknown}
\nd  Assume that we have $n$ observations $\left\{ X_{i}\right\}
_{1\le i\le n}$ of identically distributed data, each $X_{i}$
being a vector in $\mathbb{R}^{p}$, and that neither the  mean $\mu$ nor
the covariance matrix $\Sigma$ are known. A first step of the
normalization process consists in centering the data by
substraction, from each $X_{i}$, of  an estimate $\hat{\mu}$ of
its mean devised as follows \begin{equation}
\hat{\mu}=\frac{1}{n}\sum_{j=1}^{n}X_{j}.\label{eq:mean}\end{equation}
We note that this estimate coincides with the maximum likelihood
one if the data are of Gaussian nature. Let us denote, with $1\le
i\le n,$ the residuals
\begin{equation}
\epsilon_{i}=X_{i}-\hat{\mu}
\label{epsi}
\end{equation}
so that\[ E\epsilon_{i}\equiv \langle \epsilon_{i} \rangle=0.\]

\nd  The next step is the scaling of these residuals by the
unbiased estimate of the $\left(p\times p\right)$ covariance
matrix\begin{equation} \hat{\Sigma}=\frac{1}{n}
\sum_{j=1}^{n}\epsilon_{j}\epsilon_{j}^{t}\label{eq:covariance}\end{equation}
Then the normalized version of vector $X_{i},$ denoted as $Y_{i},$
writes
\begin{equation}
Y_{i}=\hat{\Sigma}^{-\frac{1}{2}}\epsilon_{i}
=\left(\frac{1}{n}
\sum_{j=1}^{n}\epsilon_{j}\epsilon_{j}^{t}\right)^{-\frac{1}{2}}\epsilon_{i}.
\label{eq:Yi}
\end{equation}
We assume here that $n\ge p+1$ so that the estimated covariance
matrix $\hat{\Sigma}$ is positive definite with probability one \cite{eaton}. This procedure is called {\it internal
Studentization}, "internal" referring to the fact that the
estimated covariance matrix $\hat{\Sigma}$ is built upon all
available data $X_j, \,\, 1\le j\le n$, including the vector
$X_{i}$ to which it is applied. Another procedure is the {\it
so-called ``external Studentization"}: it consists in normalizing
the data $X_i$ using an estimate of the covariance matrix that
involves all data except $X_{i}$, according to \begin{eqnarray}
\hat{\Sigma}_{\left(i\right)} & = & \frac{1}{n-1}\sum_{j\ne
i}^{n}\epsilon_{j}\epsilon_{j}^{t}.\label{eq:sigmai}\end{eqnarray}
so that we denote by \begin{equation} Z_{i}=
\hat{\Sigma}_{\left(i\right)}^{-\frac{1}{2}}\epsilon_{i}=
\left(\frac{1}{n-1}\sum_{j\ne
i}^{n}\epsilon_{j}\epsilon_{j}^{t}\right)^{-\frac{1}{2}}\epsilon_{i}
\label{eq:Zi}\end{equation} the resulting normalized data. It
turns out that the distribution of normalized data $Y_{i}$ and
$Z_{i}$ can be explicitly computed when the measured data $X_{i}$ are Gaussian, as follows from the theorem below due
to Diaz-Garcia et al. \cite{diaz}.

\begin{thm}
\label{thm:1} [Diaz-Garcia] suppose that matrix
$X=\left[X_{1},\dots,X_{n}\right]$ is {\sf
Gaussian}-$\mathcal{N}\left(\mu\otimes1_{n},I_{n}\otimes\Sigma\right)-$distributed %
\footnote{Footnote 1: equivalently, vectors $X_{i}$ are independent and
identically Gaussian
$\mathcal{N}\left(\mu,\Sigma\right)$ distributed%
} with $1_{n}=\left[1,\dots,1\right]\in\mathbb{R}^{n}$. 
Then the normalized data $Y_{i}$ and $Z_{i}$ are {\sf q-Gaussian}
distributed
\begin{equation}
f_{Y_{i}}\left(Y\right)=
\frac{\Gamma
\left(\frac{n-1}{2}\right)}
{\left(\pi\left(n-1\right)\right)^{\frac{p}{2}}
\Gamma\left(\frac{n-p-1}{2}\right)}
\left(1-\frac{Y^{t}Y}{n-1}\right)_{+}^{\frac{n-p-1}{2}-1}\label{eq:fY}
\end{equation}
and
\begin{equation}
f_{Z_{i}}\left(Z\right)=
\frac{\Gamma\left(\frac{n-1}{2}\right)}
{\left(\pi\left(n-1\right)\right)^{\frac{p}{2}}
\Gamma\left(\frac{n-p-1}{2}\right)}
\left(1+\frac{Z^{t}Z}{n-1}\right)^{-\frac{n-1}{2}}.\label{eq:fZ}\end{equation}
\end{thm}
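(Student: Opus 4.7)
The plan is to reduce the theorem to a standard Gaussian model and then apply a classical Wishart disintegration. By affine equivariance of the entire construction under the map $X_j\mapsto\Sigma^{-1/2}(X_j-\mu)$ (up to an orthogonal factor that is absorbed by the $O(p)$-invariance of the underlying Gaussian/Wishart densities), we may assume $\mu=0$ and $\Sigma=I_p$; in this canonical case both $Y_i$ and $Z_i$ are spherically symmetric in $\mathbb{R}^p$, so it suffices to determine the law of their squared norm. Applying an orthogonal $n\times n$ transformation whose first row is $n^{-1/2}(1,\dots,1)$ replaces the sample with $n-1$ iid vectors $V_1,\dots,V_{n-1}\sim\mathcal N(0,I_p)$ such that
\[
\epsilon_i=\sqrt{\tfrac{n-1}{n}}\,V_1,\quad n\hat\Sigma=S_n=V_1V_1^t+W,\quad W=\sum_{l=2}^{n-1}V_lV_l^t\sim W_p(n-2,I_p),
\]
with $W$ independent of $V_1$ and $S_n\sim W_p(n-1,I_p)$.

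For $Y_i$, I would start from the joint density of $(V_1,W)$, switch to $(V_1,S_n)$ (the change has unit Jacobian), and use the identity $|S_n-V_1V_1^t|=|S_n|\bigl(1-V_1^tS_n^{-1}V_1\bigr)$ to read off the conditional density of $V_1$ given $S_n$ as proportional to $(1-V_1^tS_n^{-1}V_1)_+^{(n-p-3)/2}$. The linear change $V_1\mapsto Y_i=\sqrt{n-1}\,S_n^{-1/2}V_1$ carries a Jacobian $|S_n|^{1/2}/(n-1)^{p/2}$ that precisely cancels the $S_n$-dependence, so the conditional density of $Y_i$ given $S_n$ no longer depends on $S_n$ and is therefore the marginal density. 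A radial Beta integral fixes the normalization and delivers formula (\ref{eq:fY}).

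For $Z_i$ the same strategy applies with $T=(n-1)\hat\Sigma_{(i)}=S_n-\epsilon_i\epsilon_i^t$ playing the role of $S_n$: one changes variables $(V_1,W)\mapsto(Z_i,T)$ and integrates out $T$ against a Wishart-type Laplace integral, invoking the matrix determinant lemma $|I+Z_iZ_i^t/(n-1)|=1+Z^tZ/(n-1)$ to produce the $q$-Gaussian factor $(1+Z^tZ/(n-1))^{-(n-1)/2}$ in (\ref{eq:fZ}). The key obstacle lies precisely here: unlike $S_n$, the matrix $T$ is not independent of $V_1$, so the clean disintegration used for $Y_i$ does not apply verbatim. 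I expect to have to either push the $Y_i$ density through the Sherman--Morrison deterministic relation linking $Z_i^tZ_i$ to $Y_i^tY_i$, or carry the Gaussian exponential carefully through the $(V_1,W)\mapsto(Z_i,T)$ change of variables before evaluating the Laplace integral; bookkeeping the $n$-versus-$(n-1)$ scalings at this step is the only non-routine ingredient, after which spherical symmetry promotes the radial density to the multivariate form claimed in the theorem.
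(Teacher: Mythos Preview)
The paper does not actually prove Theorem~1: it is quoted from Diaz-Garcia et al.\ \cite{diaz} and stated without argument, so there is no ``paper's own proof'' against which to compare your attempt.

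That said, your plan is the classical one and is sound. The reduction to $\mu=0$, $\Sigma=I_p$ by affine equivariance, the Helmert-type orthogonal change isolating $\epsilon_i=\sqrt{(n-1)/n}\,V_1$ and $n\hat\Sigma=S_n=V_1V_1^{t}+W$ with $W\sim W_p(n-2,I_p)$ independent of $V_1$, and the Wishart disintegration via $|S_n-V_1V_1^{t}|=|S_n|(1-V_1^{t}S_n^{-1}V_1)$ give the $Y_i$ density exactly as you describe. For $Z_i$, your diagnosis that $T=(n-1)\hat\Sigma_{(i)}=S_n-\epsilon_i\epsilon_i^{t}=\tfrac{1}{n}V_1V_1^{t}+W$ fails to be independent of $V_1$ is correct, and of the two routes you list the Sherman--Morrison pushforward is the clean one: $O(p)$-invariance reduces everything to the radial part, and
\[
\epsilon_i^{t}T^{-1}\epsilon_i=\frac{\epsilon_i^{t}S_n^{-1}\epsilon_i}{1-\epsilon_i^{t}S_n^{-1}\epsilon_i}
\]
is a monotone bijection sending the Beta-type law underlying \eqref{eq:fY} to the Beta-prime (scaled $F$) law underlying \eqref{eq:fZ}. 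The only point requiring care is exactly the one you flag: because the paper normalises $\hat\Sigma$ by $1/n$ but $\hat\Sigma_{(i)}$ by $1/(n-1)$, the quantity entering the Sherman--Morrison map is $\epsilon_i^{t}S_n^{-1}\epsilon_i=\|Y_i\|^{2}/n$ rather than $\|Y_i\|^{2}/(n-1)$, and this factor must be tracked through to recover the stated constants. There is no genuine gap in your argument.
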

\nd We remark that in the
case of internal Studentization, the normalized data have bounded support . This corresponds to a ``hard
normalization" strategy that ensures the boundedness of the data
that feed the measurement device. On the other hand, external
Studentization corresponds to a ``soft normalization" strategy in
which boundedness of the data is not crucial to the proceedings.

\nd  Moreover, a third strategy, that may be called ``full
external Studentization", is considered in \cite{eaton}: in order
to normalize vector $X_{i}$,  the estimate of the
covariance matrix $\Sigma$ \textit{and} of the mean $\mu$ are built with
\textit{all available data but} $X_{i}$. This can be the case
whenever the measurement device builds estimates on a batch basis.
As a consequence, the
estimated mean is \[
\hat{\mu}_{\left(i\right)}=\frac{1}{n-1}\sum_{j\ne i}X_{j}\] and the estimated covariance matrix is
\[
\hat{\Sigma}_{(i)}=\frac{1}{n-1}\sum_{j\ne i}^{n}(X_j-\hat{\mu}_{\left(j\right)})(X_j-\hat{\mu}_{\left(j\right)})^t.
\]
In such a scenario  the following result holds \cite{eaton}.

\begin{thm}
\label{thm:2} [Eaton] Suppose that matrix
$X=\left[X_{1},\dots,X_{n}\right]$ is Gaussian
$\mathcal{N}\left(\mu\otimes1_{n},I_{n}\otimes\Sigma\right)-$distributed.
Then, the  random vector \begin{equation} V_{i}=
\hat{\Sigma}_{\left(i\right)}^{-\frac{1}{2}}
\left(X_{i}-\hat{\mu}_{\left(i\right)}\right)\label{eq:Vi}\end{equation}
has probability distribution function (pdf)
\begin{equation}
f_{V}\left(V\right)
=\frac{\Gamma\left(\frac{n-1}{2}\right)}
{(n\pi)^{\frac{p}{2}}
\Gamma\left(\frac{n-p-1}{2}\right)}
\left(1+\frac{V^{t}V}{n}\right)^{-\frac{n-1}{2}}.\label{eq:fV}\end{equation}
\end{thm}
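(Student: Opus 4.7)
The plan is to exploit the fact that $\hat{\mu}_{(i)}$ and $\hat{\Sigma}_{(i)}$ are measurable functions of $\{X_j\}_{j\ne i}$ alone, hence independent of $X_i$, and to combine this with the classical independence of the sample mean and the sample scatter matrix within an i.i.d.\ Gaussian sample so as to decouple $V_i$ into a Gaussian piece and an independent Wishart piece. Concretely: $\hat{\mu}_{(i)}\perp\hat{\Sigma}_{(i)}$ by the Gaussian mean/scatter independence applied to $\{X_j\}_{j\ne i}$, and both are independent of $X_i$ by construction, so $X_i-\hat{\mu}_{(i)}\perp\hat{\Sigma}_{(i)}$. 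A direct variance computation gives $X_i-\hat{\mu}_{(i)}\sim\mathcal{N}(0,\tfrac{n}{n-1}\Sigma)$, and the scatter of the $n-1$ i.i.d.\ Gaussian vectors $\{X_j\}_{j\ne i}$ centred at their own sample mean gives $(n-1)\hat{\Sigma}_{(i)}\sim W_p(n-2,\Sigma)$ by Cochran's theorem, where $W_p$ denotes the Wishart distribution in dimension $p$.

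Introducing the standardised variables $G=\sqrt{(n-1)/n}\,(X_i-\hat{\mu}_{(i)})\sim\mathcal{N}(0,\Sigma)$ and $S=(n-1)\hat{\Sigma}_{(i)}\sim W_p(n-2,\Sigma)$, independent, the statistic takes the clean form $V_i=\sqrt{n}\,S^{-1/2}G$. Since the construction is orthogonally invariant and the target density (\ref{eq:fV}) depends on $V$ only through $V^tV$, I may reduce to $\Sigma=I_p$; then $V_i$ is spherically symmetric and it suffices to determine the distribution of $V_i^tV_i=n\,G^tS^{-1}G$. Invoking the standard Hotelling-$T^2$ identity $\tfrac{n-p-1}{p}\,G^tS^{-1}G\sim F_{p,\,n-p-1}$ and a change of variable gives the density of $V_i^tV_i$ as $\mathrm{const}\cdot u^{p/2-1}(1+u/n)^{-(n-1)/2}$; converting this radial density into a density on $\mathbb{R}^p$ via the spherical-shell factor $2\pi^{p/2}/\Gamma(p/2)$ and collapsing the Beta constant through $B(p/2,(n-p-1)/2)=\Gamma(p/2)\Gamma((n-p-1)/2)/\Gamma((n-1)/2)$ reproduces the stated pdf (\ref{eq:fV}).

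The main obstacle is purely computational: tracking the Jacobians and consolidating the Beta/Gamma constants so as to match the announced normalisation. An equivalent, more direct route is to integrate $S$ out of the joint density of $(V_i,S)$ via the change of variable $G=(S/n)^{1/2}V_i$ with Jacobian $|S/n|^{1/2}$; the matrix determinant lemma $|I+vv^t/n|=1+v^tv/n$ then produces the $(1+V_i^tV_i/n)^{-(n-1)/2}$ factor, and the remaining multivariate Gamma functions reduce to the stated constant via $\Gamma_p(a+1/2)/\Gamma_p(a)=\Gamma(a+1/2)/\Gamma(a-(p-1)/2)$ evaluated at $a=(n-2)/2$.
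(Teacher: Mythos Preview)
Your argument is correct: the independence structure, the identifications $X_i-\hat\mu_{(i)}\sim\mathcal N(0,\tfrac{n}{n-1}\Sigma)$ and $(n-1)\hat\Sigma_{(i)}\sim W_p(n-2,\Sigma)$, the reduction to $V_i=\sqrt{n}\,S^{-1/2}G$, and the Hotelling-$T^2$/$F$ conversion all check out and yield exactly the stated density after the Beta--Gamma simplification. The paper itself does not prove this theorem---it simply cites Eaton and quotes the phrase ``routine multivariate calculation''---so your proposal is precisely the routine calculation the paper defers to the reference for, carried out in full.
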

\begin{proof}
See the proof in \cite{eaton}, where it is qualified as a \char`\"{}routine
multivariate calculation\char`\"{}.
\end{proof}

\subsection{Unknown variance}
In some contexts one may assume that the mean of  the data is
known. Thus, by replacing $X_i$ by $X_i-\mu$ one may state,
without loss of generality, that the mean equals zero. The two
strategies - internal or external Studentization - remain
possible, except that $\hat \mu$ should now be replaced by $0$ in
(\ref{eq:Yi}), (\ref{eq:Zi}), and (\ref{epsi}). The distributions
of the normalized data are expressed as follows.

\begin{thm}
\label{thm:3}
Under the same hypotheses as in Th.\ref{thm:1},  and assuming that
the expectation of the data vanishes, the normalized data $Y_i$
and $Z_i$ {\sf are $q-$Gaussian distributed}
\begin{equation}
f_{Y_{i}}\left(Y\right)=
\frac{\Gamma\left(\frac{n}{2}\right)}
{\left(\pi n \right)^{\frac{p}{2}}
\Gamma\left(\frac{n-p}{2}\right)}\left(1-\frac{Y^{t}Y}{n}\right)_{+}^{\frac{n-p}{2}-1}
\label{eq:fY1}\end{equation} 
and
\begin{equation}
f_{Z_{i}}\left(Z\right)= \frac{\Gamma\left(\frac{n}{2}\right)}
{\left(\pi n \right)^{\frac{p}{2}}
\Gamma\left(\frac{n-p}{2}\right)}
\left(1+\frac{Z^{t}Z}{n}\right)^{-\frac{n}{2}}.\label{eq:fZ1}
\end{equation}
\end{thm}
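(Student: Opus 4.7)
The plan is to adapt the argument of Theorem \ref{thm:1} to the simpler setting in which the mean is known to vanish. The only substantive change is that one no longer subtracts $\hat{\mu}$, so the residuals $\epsilon_{i}=X_{i}$ and the empirical covariance matrix $\hat{\Sigma}=\tfrac{1}{n}\sum_{j=1}^{n}X_{j}X_{j}^{t}$ is Wishart $W_{p}(\Sigma,n)$ with \emph{one extra degree of freedom} compared to the unknown-mean case ($n$ rather than $n-1$). Analogously, $\hat{\Sigma}_{(i)}$ becomes $W_{p}(\Sigma,n-1)$. This is precisely the shift $n-1\mapsto n$ that appears in (\ref{eq:fY1})--(\ref{eq:fZ1}) relative to (\ref{eq:fY})--(\ref{eq:fZ}), which suggests the derivation should carry over verbatim with this substitution.

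Concretely, I would proceed as follows. First, reduce to $\Sigma=I_{p}$ by the affine invariance $X_{j}\mapsto\Sigma^{-1/2}X_{j}$, which leaves both $Y_{i}$ and $Z_{i}$ invariant in distribution. Second, split off the $i$-th observation: $n\hat{\Sigma}=X_{i}X_{i}^{t}+S_{(i)}$, where $S_{(i)}:=\sum_{j\ne i}X_{j}X_{j}^{t}\sim W_{p}(I_{p},n-1)$ is independent of $X_{i}\sim\mathcal{N}(0,I_{p})$. Applying the Sherman--Morrison identity gives
\[
X_{i}^{t}(n\hat{\Sigma})^{-1}X_{i}=\frac{T}{1+T},\qquad T:=X_{i}^{t}S_{(i)}^{-1}X_{i},
\]
so $Y_{i}^{t}Y_{i}=nT/(1+T)$ and $Z_{i}^{t}Z_{i}=(n-1)T$, with $1-Y_{i}^{t}Y_{i}/n=1/(1+T)$ and $1+Z_{i}^{t}Z_{i}/(n-1)=1+T$.

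Third, invoke the classical Hotelling/multivariate-Student fact that for $X\sim\mathcal{N}(0,I_{p})$ independent of $W\sim W_{p}(I_{p},m)$, the scalar $X^{t}W^{-1}X$ has density proportional to $t^{p/2-1}(1+t)^{-(m+1)/2}$ on $(0,\infty)$, with normalization $\Gamma((m+1)/2)/[\Gamma(p/2)\Gamma((m-p+1)/2)]$. Taking $m=n-1$ yields the distribution of $T$. Fourth, exploit rotational invariance: $Y_{i}$ and $Z_{i}$ are spherically symmetric (by the $O(p)$-invariance of the joint law of $(X_{i},S_{(i)})$), so their densities depend only on the squared norms, which I recover from the density of $T$ via the change of variables $T\mapsto Y_{i}^{t}Y_{i}$ (resp.\ $T\mapsto Z_{i}^{t}Z_{i}$) and division by the surface area factor $\frac{2\pi^{p/2}}{\Gamma(p/2)}r^{p-1}$ of the sphere of radius $r=\|Y_{i}\|$ in $\mathbb{R}^{p}$.

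The main technical obstacle is not conceptual but rather the bookkeeping of Gamma factors: one must verify that the normalization of the density of $T$, combined with the Jacobian of $T\mapsto\|Y_{i}\|^{2}$ and the spherical surface factor, collapses cleanly to the prefactors $\Gamma(n/2)/[(\pi n)^{p/2}\Gamma((n-p)/2)]$ asserted in (\ref{eq:fY1}) and (\ref{eq:fZ1}). An equally acceptable route, which avoids redoing this bookkeeping, is to argue directly that Diaz-Garcia's proof of Theorem \ref{thm:1} uses only two ingredients---the Wishart degrees of freedom of the estimated covariance and the independence between $\epsilon_{i}$ and $\hat{\Sigma}_{(i)}$---and that in the known-mean case both ingredients are present with $n-1$ replaced by $n$, yielding (\ref{eq:fY1})--(\ref{eq:fZ1}) by the same calculation.
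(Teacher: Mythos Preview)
The paper gives no explicit proof of this theorem; its only justification is the one-line remark immediately following the statement, namely that the known-mean densities are obtained from Diaz--Garc\'{\i}a's unknown-mean ones (Theorem~\ref{thm:1}) by the parameter shift $n-1\mapsto n$, reflecting the extra Wishart degree of freedom gained when $\hat{\mu}$ is not subtracted. Your opening paragraph and your closing ``equally acceptable route'' are precisely this argument, so at that level your proposal and the paper's justification coincide.

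Where you go beyond the paper is in the middle: you supply a self-contained derivation via the Sherman--Morrison identity, the Hotelling-type density of $T=X_i^{t}S_{(i)}^{-1}X_i$, and spherical symmetry. This is correct and considerably more informative than what the paper offers. One small caution: with $\hat{\Sigma}_{(i)}=\frac{1}{n-1}\sum_{j\ne i}X_jX_j^{t}$ as dictated by \eqref{eq:sigmai}, your own computation gives $Z_i^{t}Z_i=(n-1)T$ and hence a density proportional to $\bigl(1+Z^{t}Z/(n-1)\bigr)^{-n/2}$ with prefactor involving $(\pi(n-1))^{p/2}$, not the $n$ appearing in \eqref{eq:fZ1}. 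So the bookkeeping will \emph{not} ``collapse cleanly'' to \eqref{eq:fZ1} as you anticipate; this appears to be a minor inconsistency in the paper's stated formula (the exponent is right, the scale is off by one) rather than a flaw in your method. Your $Y_i$ calculation, by contrast, reproduces \eqref{eq:fY1} exactly.
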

We remark that the known mean distributions (\ref{eq:fY1}) and (\ref{eq:fZ1}) can be recovered from the
unknown mean ones (\ref{eq:fY}) and (\ref{eq:fZ}) by replacing the parameter $n$ by $n-1$.
\section{Extension to elliptical matrix distributions}

\nd  The class of elliptically distributed random matrices plays
an important role in statistics \cite{diaz,ellenberg}. Let us
devote a few words to the concept of elliptical symmetry, {\it a
generalization} of the celebrated spherical symmetry  to which
the multi-normal distribution belongs. Spherical symmetry, 
that is invariance against rotations, found in the fundamental
laws of nature,  constitutes one of the most powerful principles
in elucidating the structure of individual atoms, complicated
molecules, entire crystals, and many other systems. Elliptical
distributions have recently  gained a lot of attention in
financial mathematics, being of use particularly in risk
management.
\nd  In what follows, we restrict our attention to the subset of
elliptical matrices which have absolutely continuous
distributions. The pertinent definition reads as follows
\cite{gupta}.
\begin{defn*}
A $\left(p\times n\right)$ random matrix $X$ has a matrix variate,
elliptical contoured distribution, denoted as
$\mathcal{E}_{p,n}\left(M,\Sigma\otimes\Phi,h\right),$ {\sf if}
its probability distribution function  writes \ben &
f_{X}\left(X\right)= \vert\Sigma\vert^{-\frac{n}{2}}
\vert\Phi\vert^{-\frac{p}{2}} \times\cr &
h\left[tr\left(\left(X-M\right)^{t}
\Sigma^{-1}\left(X-M\right)\right)\Phi^{-1}\right],\een where the
dimensions of the involved matrices are, respectively, $T:\left(p\times n\right), M:\left(p\times
n\right), \Sigma:\left(p\times p\right), \Phi:\left(n\times
n\right).$ Moreover, matrices $\Sigma$ and $\Phi$ are definite positive and function
$h:\left[0,\infty\right[\rightarrow\mathbb{R}^{+}$ is called
the density generator of $X.$ In what follows, we restrict our
attention to the case $\Phi=I_{n}$.
\end{defn*}
\nd A very important result of this paper can be phrased  as
follows:

\begin{thm}
\label{thm:4}
Theorem \ref{thm:1}, \ref{thm:2} and \ref{thm:3} still hold under the
general assumption that $X\sim\mathcal{E}_{p,n}\left(M,\Sigma\otimes I_{n},h\right).$
\end{thm}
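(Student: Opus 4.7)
The plan is to reduce the elliptical case to the Gaussian case covered by Theorems~\ref{thm:1}--\ref{thm:3}, using the scale-mixture structure of matrix-variate elliptical laws together with two elementary invariances of the Studentization statistics.

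First, I would invoke the standard stochastic representation: if $X \sim \mathcal{E}_{p,n}(M, \Sigma \otimes I_n, h)$, then $X \stackrel{d}{=} M + \Sigma^{1/2} T$, where $T$ is a $(p \times n)$ matrix with spherical density proportional to $h(tr\,T^tT)$. Writing $R = \sqrt{tr\,T^tT}$ and $U = T/R$, the angular part $U$ is distributed uniformly on the Frobenius unit sphere and is independent of $R$; crucially, this uniform law is \emph{the same} for every density generator $h$, and in particular coincides with the one obtained by taking $T$ to be standard Gaussian.

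Second, I would verify that the statistics $Y_i$, $Z_i$, $V_i$ appearing in Theorems~\ref{thm:1}--\ref{thm:3} are jointly translation- and positive-scale-invariant as functions of the input matrix $X$. Translation invariance is immediate since each statistic is built from residuals of the form $X_i - \hat\mu$ or $X_i - \hat\mu_{(i)}$. Positive-scale invariance follows because, under $X \mapsto aX$ with $a > 0$, the residuals scale as $a$ while the covariance estimate scales as $a^2$, so that its inverse square root scales as $a^{-1}$ and the two factors cancel. Combining these invariances with the representation $X \stackrel{d}{=} M + \Sigma^{1/2} R U$, the Studentized statistic reduces to a deterministic function of $U$ alone. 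Since the law of $U$ does not depend on $h$, it agrees with its Gaussian value, and the pdfs stated in (\ref{eq:fY})--(\ref{eq:fY1}), (\ref{eq:fZ})--(\ref{eq:fZ1}) and (\ref{eq:fV}) follow immediately from Theorems~\ref{thm:1}--\ref{thm:3}.

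The subtle point that requires care is that $\Sigma^{1/2}$ does not commute with the symmetric square root of the sample covariance, so one cannot simply ``peel off'' $\Sigma^{1/2}$ from $Y_i(\Sigma^{1/2}T)$ to recover $Y_i(T)$; the two differ by a data-dependent orthogonal rotation. All that is really needed, however, is that the radial factor $R$ cancels cleanly between $[R^2 \Sigma^{1/2} \hat\Sigma(U) \Sigma^{1/2}]^{-1/2}$ and the outer $R\,\Sigma^{1/2}(U_i - \bar U)$, leaving a quantity that is a function of $U$ only. This is the one short computation that must be carried out honestly; it is the elliptical analogue of the ``routine multivariate calculation'' invoked in Eaton's proof of Theorem~\ref{thm:2}.
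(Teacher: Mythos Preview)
Your approach is essentially the paper's: use the stochastic representation $X\stackrel{d}{=}M+r\Sigma^{1/2}U$ (Lemma~\ref{lem:lemma1}) and observe that each of $Y_i,Z_i,V_i$ is translation-invariant and homogeneous of degree~$0$ in the data, so the random scalar $r$ drops out and the law of the statistic coincides with its value on Gaussian input. The paper states this in two lines after first reducing to $M=0$, $\Sigma=I$; you keep $M$ and $\Sigma$ explicit, but the mechanism is identical.

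Your closing paragraph over-complicates the $\Sigma^{1/2}$ issue. You never need to ``peel off'' $\Sigma^{1/2}$ or track the data-dependent orthogonal factor: once $r$ is cancelled by homogeneity you are left with the statistic evaluated at $\Sigma^{1/2}U$, which has the same law as the statistic evaluated at $\Sigma^{1/2}N/\|N\|$, hence (homogeneity again) at the Gaussian matrix $\Sigma^{1/2}N$. Theorems~\ref{thm:1}--\ref{thm:3} are already stated for arbitrary covariance $\Sigma$, so this finishes the argument with no further computation.
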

\begin{proof}
By proper scaling, it suffices to consider $X\sim\mathcal{E}_{p,n}\left(0,I_{p,n},h\right):$
in this case, a stochastic representation of $X$ is, by lemma \eqref{lem:lemma1}
below \begin{equation}
X=rU\label{eq:XrU}\end{equation}
with \begin{equation}
U=\frac{N}{\Vert N\Vert},\label{eq:U}\end{equation}
$N$ being a Gaussian $\mathcal{N}_{n,p}\left(0,I_{n,p}\right)$ matrix.
Since any of the vectors $Y_{i}, Z_{i}$ and $V_{i}$ given by equations
\eqref{eq:Yi}, \eqref{eq:Zi} and \eqref{eq:Vi} are homogeneous
functions of order $0$ of the data $X_{i},$ we deduce that Gaussian
data can be replaced by uniform data according to \eqref{eq:U}, and
thus by elliptical data according to \eqref{eq:XrU}.
\end{proof}
\nd Note that the proof can also deduced from the more general
Thm. 5.3.1 in \cite{gupta2}.

\nd Moreover, this result can be generalized to a wider class of distributions by
noticing that  the distribution of the normalized residuals does
not depend on the covariance matrix $\Sigma$ of the data.
Consequently, matrix $\Sigma$ can be  randomly chosen within the
set of positive definite matrices, independently of the data. We
thus can state the following

\begin{thm}
\label{thm:5}
The result of theorem \ref{thm:3} extends  to the general case $X
\sim \mathcal{E}_{p,n}\left(M,\Sigma\otimes I_{n},h\right)$ with
 a random and positive definite matrix $\Sigma$.
\end{thm}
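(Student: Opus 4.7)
The plan is to reduce the claim to Theorem~\ref{thm:4} by a simple conditioning argument. The crucial observation is that the q-Gaussian densities \eqref{eq:fY1} and \eqref{eq:fZ1} delivered by Theorem~\ref{thm:3} (and extended in Theorem~\ref{thm:4}) carry no dependence on the covariance matrix $\Sigma$: only $n$ and $p$ appear as parameters. Once this $\Sigma$-invariance is in hand, randomizing $\Sigma$ cannot possibly alter the marginal law of the normalized data $Y_i$ and $Z_i$.

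First I would write the joint density in the factored form $f_{X,\Sigma}(x,\sigma)=f_{X\mid\Sigma}(x\mid\sigma)\,f_{\Sigma}(\sigma)$, where $f_{\Sigma}$ is an arbitrary probability law supported on the cone of $p\times p$ positive definite matrices, drawn independently of the elliptical noise mechanism. Conditionally on $\Sigma=\sigma$, the hypothesis of Theorem~\ref{thm:5} is exactly that $X\sim\mathcal{E}_{p,n}(M,\sigma\otimes I_n,h)$, so Theorem~\ref{thm:4} directly supplies the conditional pdfs of $Y_i$ and $Z_i$ in the explicit closed forms \eqref{eq:fY1} and \eqref{eq:fZ1}.

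The decisive step is then almost tautological: since these conditional densities are constant functions of $\sigma$, marginalization gives
\[
f_{Y_i}(y)=\int f_{Y_i\mid\Sigma}(y\mid\sigma)\,f_{\Sigma}(\sigma)\,d\sigma=f_{Y_i\mid\Sigma}(y\mid\sigma),
\]
which is precisely the q-Gaussian \eqref{eq:fY1}, and similarly for $Z_i$ and \eqref{eq:fZ1}. Theorem~\ref{thm:3} therefore survives the additional layer of randomization.

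The only potential obstacle is making sure that the $\Sigma$-independence invoked above is genuine rather than a notational artifact. This is ultimately traceable to the homogeneity of degree zero of $Y_i$ and $Z_i$ in the residuals $\epsilon_i$: any global whitening by $\Sigma^{1/2}$ cancels out simultaneously in $\hat{\Sigma}^{-1/2}$ and in $\epsilon_i$. That homogeneity was already the engine of Theorem~\ref{thm:4}, so no new analytic machinery is required; Theorem~\ref{thm:5} is essentially a corollary obtained by integrating out a conditioning variable whose value is irrelevant to the statistic being computed.
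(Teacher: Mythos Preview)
Your argument is correct and matches the paper's own reasoning: the paper does not supply a formal proof but simply observes, in the paragraph preceding Theorem~\ref{thm:5}, that the distributions of the normalized residuals are free of $\Sigma$, so that $\Sigma$ may be randomized without effect. Your conditioning-and-marginalization write-up is exactly the rigorous version of that remark.
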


\section{Discussion and conclusions}

\nd  The following sketch illustrates the main result of this paper.

\begin{figure}[h]
\centering
\includegraphics[scale=0.5]{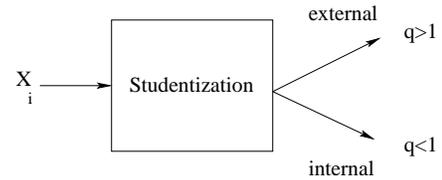}
\caption{the effect of Studentization}
\label{fig:result}
\end{figure}

\nd  In other words, any of the three scaling preprocessing
operations described above maps the set of elliptically invariant
distributions on the set of $q-$Gaussian distributed data.
Depending on the scaling method used, the resulting value of the
parameter $q$ is given as follows.

\begin{table}[h]
\begin{center}
\begin{tabular}{|c|c|c|c|}
\hline Studentization method & internal & external & full
external\tabularnewline \hline \hline value of $q$ &
$\frac{n-p-5}{n-p-3}<1$ & $\frac{n+1}{n-1}>1$ &
$\frac{n+1}{n-1}>1$\tabularnewline \hline\end{tabular}
\caption[Short Caption] {Values of parameter  $q$ according to the
normalization procedure: case where both mean and covariance are
unknown}
\end{center}
\end{table}

\begin{table}[h]
\begin{center}
\begin{tabular}{|c|c|c|}
\hline Studentization method & internal & external \tabularnewline
\hline \hline value of $q$ & $\frac{n-p-4}{n-p-4}<1$ &
$\frac{n+2}{n}>1$ \tabularnewline \hline\end{tabular}
\caption[Short Caption]{Values of  parameter $q$ according to the
normalization procedure: case where only the covariance is
unknown}
\end{center}
\end{table}

\nd  Two remarks are of interest at this point:
\begin{enumerate}
\item observation of the data after
 the  normalization stage does not allow to infer
 the distribution of the data:
 in particular, putative Gaussian data
 are systematically transformed into $q-$Gaussian data.
\item in the same way,
$q-$Gaussian data are transformed
into $q'-$Gaussian data, with
parameter $q'$ given by one of the values in Table 1, depending on
the normalization procedure: this means that the real value of
parameter $q$ is "erased" by the normalization process.
\end{enumerate}
{\bf As a conclusion}, the origin of $q-$Gaussian  data should be carefully analyzed, since they may occur, for a  very large set of recorded data (namely the set of elliptical ones), as a simple consequence of a statistical normalization step. In other words, the putative ``$q-$Gaussianity" may be a mere  artifact of the  statistical normalization step: caution is to be exercised.

As an example, a measured value of the nonextensivity parameter $q$ appears in \cite[p.230]{Europhysics} in the context of financial markets as follows:
\begin{quotation}{" \dots  returns (once demeaned and normalized by their standard deviation) have a distribution that is very well fit by q-Gaussians with $q \approx 1.4.$"}
\end{quotation}
The estimated value $q=\frac{7}{5}$ corresponds to an external  Studentization with $n=6$ data. Now, the general result from our theorem \ref{thm:5} above suggests that the "real" distribution of these data may well indeed differ from the estimated distribution.  

\section{Annex: Stochastic representation}

\begin{lem}
\label{lem:lemma1}If $X\sim\mathcal{E}_{p,n}\left(M,\Sigma\otimes I_{n},h\right)$
then the stochastic representation \[
A=M+r\Sigma^{\frac{1}{2}}U\]
 holds where $U$ is a uniform matrix on the manifold of $\left(p\times n\right)$
matrices with unit Frobenius norm (see Footnote 2
)\footnote{Footnote 2: the Frobenius norm of matrix $A$ is $\Vert
A\Vert=\sqrt{tr\left(AA^{t}\right)}$ }, and $r$ is a positive
random variable independent of matrix $U.$
\end{lem}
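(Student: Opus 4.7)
The plan is to reduce the general elliptical case to a spherically symmetric one by an affine change of variables, and then invoke the classical polar decomposition for spherically symmetric distributions in Euclidean space.

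First I would set $Y = \Sigma^{-1/2}(X - M)$. Substituting into the defining density of $\mathcal{E}_{p,n}(M,\Sigma\otimes I_{n},h)$ and accounting for the Jacobian $|\Sigma|^{n/2}$, the density of $Y$ reduces to $f_{Y}(Y) = h\left(\mathrm{tr}(Y^{t}Y)\right) = h(\|Y\|_{F}^{2})$, a function of the Frobenius norm alone. Identifying the $(p\times n)$ matrix $Y$ with its vectorization $y = \mathrm{vec}(Y) \in \mathbb{R}^{pn}$, the law of $y$ has density $h(\|y\|^{2})$ and is therefore invariant under every orthogonal transformation of $\mathbb{R}^{pn}$, i.e.\ spherically symmetric.

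Next I would apply the standard polar decomposition for spherically symmetric vectors in $\mathbb{R}^{pn}$: any such $y$ admits the representation $y = r u$, with $r = \|y\|$ a nonnegative random variable (whose density is proportional to $s^{pn-1} h(s^{2})$), $u = y/\|y\|$ uniform on the unit sphere $S^{pn-1}$, and $r$ independent of $u$. The independence follows from the observation that, conditional on $\|y\| = s$, the conditional law of $y$ is uniform on the sphere of radius $s$, because the joint density depends on $y$ only through its norm. Un-vectorizing, this reads $Y = r U$ with $U$ uniform on the manifold of $(p\times n)$ matrices of unit Frobenius norm and $r \perp U$.

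Finally, undoing the initial change of variables yields $X = M + \Sigma^{1/2} Y = M + r \Sigma^{1/2} U$, the claimed representation; independence of $r$ and $U$ survives the transformation since $\Sigma^{1/2}$ acts on $U$ alone. The only genuinely nontrivial ingredient is the polar decomposition for spherically symmetric laws, which I would either cite as a classical fact (e.g.\ from Fang--Kotz--Ng, or from \cite{gupta}) or verify directly by the conditioning argument sketched above; everything else is bookkeeping that converts the matrix structure into the Euclidean structure of the vectorization.
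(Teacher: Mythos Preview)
Your proposal is correct and follows essentially the same route as the paper: vectorize the matrix to identify the Frobenius norm with the Euclidean norm in $\mathbb{R}^{pn}$, invoke the standard polar decomposition for spherically symmetric vectors (the paper cites Fang--Kotz--Ng for this), and then un-vectorize. If anything, you are more explicit than the paper about the preliminary affine reduction $Y=\Sigma^{-1/2}(X-M)$ and the associated Jacobian; the paper's proof jumps straight to the standardized case and leaves that step implicit.
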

\begin{proof}
Associate to $A\left(n\times p\right)$ the vector
$a=vec\left(A\right)\in\mathbb{R}^{np}$ so that $\Vert
a\Vert^{2}=\Vert A\Vert^{2}$ and \[
f_{a}\left(a\right)=f\left(\Vert a\Vert^{2}\right).\] Thus a
stochastic representation for vector $a$ is \cite{Fang} $a=ru$
where $u$ is uniformly distributed on the sphere in
$\mathbb{R}^{np}$ and $r$ is positive and independent of $u;$ thus
$u$ writes\[ u=\frac{g}{\Vert g\Vert}\] where $g$ is a $(n\times
p)$ Gaussian vector; we deduce that \[ A=rU\] where $U$ is
obtained by stacking the columns of $u$ in a $\left(n\times
p\right)$ matrix form so that \[
U=\frac{G}{\sqrt{tr\left(G^{*}G\right)}}\] where $G$ is a Gaussian
$\left(n\times p\right)$ matrix, so that $U$ is uniform on the set
of $\left(n\times p\right)$ matrices with unit norm.
\end{proof}

\begin{acknowledgement}
Illuminating discussions with  Pr. A. O. Hero (E.E.C.S.,
University of Michigan) and with J. A. Diaz-Garcia (Universidad
Autonoma Agraria Antonio Narro, Mexico) are gratefully
acknowledged.
\end{acknowledgement}

\end{document}